\documentclass[12pt]{article}
\usepackage{amsmath,amssymb,amsfonts,amsthm,graphicx}
\usepackage{tkz-graph}
\usepackage{color}
\usepackage{cite}
\usepackage{epsfig}
\usepackage{epstopdf}
\usepackage{footnote}
\usepackage{cite}
\usepackage{caption}
\usepackage{subcaption}
\usepackage{enumerate}
\usepackage{algorithm}
\usepackage{algpseudocode}
\usepackage{times}
\usepackage[top=3cm, bottom=3cm, left=2cm, right=2cm]{geometry}

\newtheorem{theorem}{Theorem}[section]
\newtheorem{lemma}[theorem]{Lemma}

\theoremstyle{definition}
\newtheorem{definition}[theorem]{Definition}

\theoremstyle{remark}
\newtheorem{remark}[theorem]{\bf Remark}

\algnewcommand\algorithmicswitch{\textbf{switch}}
\algnewcommand\algorithmiccase{\textbf{case}}
\algnewcommand\algorithmicassert{\texttt{assert}}
\algnewcommand\Assert[1]{\State \algorithmicassert(#1)}%
\algdef{SE}[SWITCH]{Switch}{EndSwitch}[1]{\algorithmicswitch\ #1\ \algorithmicdo}{\algorithmicend\ \algorithmicswitch}%
\algdef{SE}[CASE]{Case}{EndCase}[1]{\algorithmiccase\ #1}{\algorithmicend\ \algorithmiccase}%
\algtext*{EndSwitch}%
\algtext*{EndCase}%
\renewcommand\labelitemii{$\bullet$}
\title{An Efficient Algorithm for Mixed Domination on Generalized Series-Parallel Graphs}

\author{\small\it  M.Rajaati$^1$, P. Sharifani$^{2}$, A. Shakiba$^{3}$, M. R. Hooshmandasl$^{4}$, M. J. Dinneen$^{5}$   \\
	\footnotesize{$^{1,2,4}$Department of Computer Science, Yazd University, Yazd, Iran. }  \\
	\footnotesize{$^{3}$ Department of Computer Science, Vali-e-Asr University of Rafsanjan, Rafsanjan, Iran.}\\
	\footnotesize{$^{5}$Department of Computer Science, The University of Auckland, New Zealand.}\\
	\footnotesize{e-mail:$^1$m.rajaati@stu.yazd.ac.ir, $^2$ pouyeh.sharifani@gmail.com $^3$ali.shakiba@vru.ac.ir.}\\
 \footnotesize{$^4$hooshmandasl@yazd.ac.ir, $^5$m.dinneen@auckland.ac.nz.}
}

\date{}

\begin{document}

	\maketitle
	
	\begin{abstract}
		A mixed dominating set $S$ of a graph $G=(V,E)$ is a subset $ S \subseteq V \cup E$ such that each element $v\in (V \cup E) \setminus S$ is adjacent or incident to  at least one element in $S$.
		The mixed domination number $\gamma_m(G)$ of a graph $G$ is the minimum cardinality among all mixed dominating sets in $G$.
		The problem of finding $\gamma_{m}(G)$ is know to be  NP-complete. In
this paper, we present an explicit polynomial-time algorithm to construct a mixed
dominating set of size $\gamma_{m}(G)$ by a parse tree when $G$ is a generalized series-parallel graph.

\noindent\textbf{Keywords:} Mixed Dominating Set; Generalized Series-Parallel; Parse Tree; Tree-width.

	\end{abstract}

\section{Introduction}
A subset $S \subseteq V \cup E$ in graph $G=(V,E)$ is a mixed dominating set if for every $v\in (V \cup E) \setminus S$, where $v$ is either adjacent or incident to at least one element of $S$.
The mixed domination problem, also know as the total cover problem, is a variant of
classical dominating set problem and was introduced by Alavi et al.~in 1977~\cite{alavi1977total}.  
One of the known applications of the mixed domination problem is placing phase
measurement units (PMUs) in an electric power system~\cite{zhao2011algorithmic}. 
The minimum cardinality among all mixed dominating sets in $G$ is denoted by $\gamma_m(G)$.   
In~\cite{alavi1977total}, Alavi et al.~showed that this number is bounded from above by $\lceil n/2 \rceil$ for a connected graph $G$ of order $n$. They illustrated some extremal cases and gave some properties for connected graphs which have a total covering number equal to $\lceil n/2 \rceil$ in~\cite{alavi1992total}.
In~\cite{majumdar1992neighborhood}, Majumdar showed that the problem of finding $\gamma_m(G)$  is NP-complete  for general  graphs.
 Also, it is showed that even when the problem is restricted to chordal
graphs~\cite{hedetniemi1995domination}, planar bipartite graphs~\cite{manlove1999algorithmic},  and split graphs\cite{lan2013mixed, zhao2011algorithmic} is remain $NP$-complete. 
Finding a mixed dominating set whit minimum cardinality is tractable for some family of
graphs such as trees~\cite{adhar1994mixed,zhao2011algorithmic,lan2013mixed}, cactus
graphs~\cite{lan2013mixed} and graphs with bounded tree-width~\cite{rajaati2016fixed}.    

Rajaati et al.~presented a dynamic programming algorithm to solve the mixed domination problem on graphs with bounded tree-width by using tree decomposition of graphs but we use the parse tree of graphs in order to present an explicit polynomial-time algorithm to construct a mixed dominating set for generalized  series-parallel graphs.

In this paper, we propose a new dynamic programming algorithm to compute $\gamma_m(G)$ for a given generalized series-parallel graph $G$ in linear time. Moreover, we enumerate the number of $\gamma_m$-sets of $G$.
The rest of the paper is organized as follows. 
 In Section~\ref{Sec2}, we review basic definitions and  notions. 
  In Section~\ref{Sec3}, using a parse tree of the generalized series-parallel graph $G$, we present
a linear time algorithm to find a $\gamma_m$-set and determine the number of $\gamma_m$-sets for $G$. This section concludes by  analyzing the correctness and computational complexity of the proposed algorithms.

\section{Preliminaries}\label{Sec2}
In this section, we review some required graph theory and set our notation. For
notation and terminology that do not appear here, or for more details, an interested reader is advised to consult~\cite{west2001introduction}.
 
 In a graph $G = (V, E)$, the neighborhood of a vertex $v \in V$ is the set of all vertices adjacent to $v$ and is denoted by $N_G(v)$. The closed neighborhood of a vertex $v$ is defined as $N_G[v] = N_G(v) \cup \{v \}$.
Similarly, for an element $r\in V \cup E$, the mixed neighborhood of $r$ in $G$ is denoted by $N^{md}_{G}(r)$ and is defined $N^{md}_G(r) = \{ s\in V \cup E \mid   s\, \text{is adjacent or incident to}\, r \}$. Similarly, the closed mixed neighborhood of $r$ is denoted by $N^{md}_G[r]$ and equals $N^{md}_G[r] =N^{md}_G(r)\cup \{r\}$.

Domination in graphs and its variations are well studied topics in the literature~\cite{haynes1998fundamentals,haynes1997domination}. One of these variants is the mixed domination problem. 
A subset $ S \subseteq V \cup E$ is a mixed dominating set, if for every $r\in V \cup E$, it is the case that $|N^{md}_G[r] \cap S|\geq 1 $. The minimum cardinality of  such sets is denoted by $\gamma_{m}(G)$ and a $\gamma_m$-set for $G$ is a mixed dominating set of size $\gamma_{m}(G)$.

\begin{definition}[Generalized Series-Parallel Graphs~\cite{chebolu2012exact}]
	A generalized series-parallel, or GSP for short, is a graph $G=(V,E,s,t)$ with two distinguished vertices  $s,t\in V$ called terminals and is defined recursively as follows:
	\begin{itemize}
		\item[$(1)\;\;o_i$:]
		A graph $G$ consisting of two vertices connected by a single edge is a GSP.
		\item[$(2)\;\;o_s$:]
		Given two GSP graphs $G_1=(V_1,E_1,s_1,t_1)$ and $G_2=(V_2,E_2,s_2,t_2)$, the series operation of $G_1$ and $G_2$  is a new GSP graph $G=(V,E, s_1,t_2)$ denoted by $G_1 o_s G_2$ where
		$$V=V_1\cup V_2\setminus \{s_2\},$$
		and
		$$E=E_1 \cup E_2 \cup\{\{t_1,v\}: v\in N_{G_2}(s_2)\}\setminus \{\{s_2,v\}: v\in N_{G_2}(s_2)\}.$$
			\item[$(3)\;\;o_p$:]
		Given two GSP graphs $G_1=(V_1,E_1,s_1,t_1)$ and $G_2=(V_2,E_2,s_2,t_2)$, the parallel operation of $G_1$ and $G_2$  is  a new GSP graph  $G_1 o_p G_2=(V,E, s_1,t_1)$, where
				$$V=V_1\cup V_2\setminus \{s_2,t_2\},$$ and
		\begin{align*}
		E=\left(E_1 \cup E_2 \cup\{\{s_1,v\}: v\in N_{G_2}(s_2)\}\cup\{\{t_1,v\}: v\in N_{G_2}(t_2)\}\right)\setminus\\
		\left(\{\{s_2, v\}: v\in N_{G_2}(s_2)\} \cup \{\{t_2,v\}: v\in N_{G_2}(t_2)\}\right)
		\end{align*}
		
		\item[$(4)\;\;o_g$:]	Given two GSP graphs $G_1=(V_1,E_1,s_1,t_1)$ and $G_2=(V_2,E_2,s_2,t_2)$, the generalized series operation of $G_1$ and $G_2$  is a new GSP graph $G_1 o_g G_2=(V,E, s_1,t_1)$ where
			$$V=V_1\cup V_2\setminus \{s_2\}, $$ and
		$$E=E_1 \cup E_2 \cup\{\{t_1,v\}: v\in N_{G_2}(s_2)\}\setminus \{\{s_2,v\}: v\in N_{G_2}(s_2)\}.$$
		\item[$(5)\;\;\;\;\;\;$]		Any GSP graph is obtained by finite application of rules $(2)$  through $(4)$ by starting at rule $(1)$.
	\end{itemize}

\end{definition}
If we forbid rule $(4)$, then we obtain a subclass of GPSs called series-parallel or SP graphs. The application of these rules is illustrated in Figure~\ref{fig-gspop}. Note that in Figure~\ref{fig-gspop}, the graph $(\hat{G}o_g(G_1o_sG_2))o_s((G_1o_sG_2)o_p\hat{G})$ is a GSP; however it is not an $SP$.
\begin{figure}[!h]
	\centering
	\includegraphics[width=12cm]{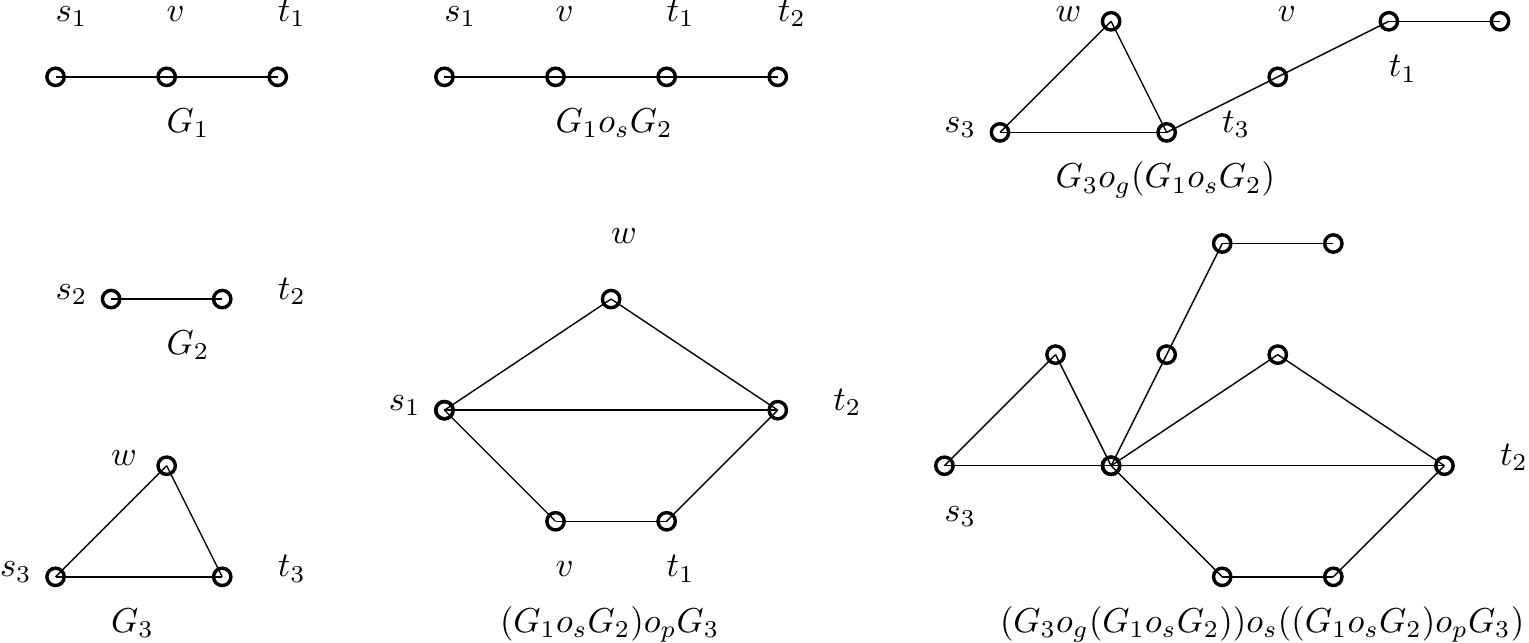}
	\caption{An illustration of constructing GSP graphs.}\label{fig-gspop}
\end{figure}
The concept of $p$-graph is defined for $SP$s in~\cite{kikuno1983linear} and  we generalize it to GSP graphs.   
\begin{definition}[$p$-graph]
		Let $G = (V, E,x,y)$  be a GSP and  $\hat{G} =(\hat{V}, \hat{E},\hat{x},\hat{y})$ be subgraph of $G$ satisfying  the following conditions:
	\begin{enumerate}
		\item
		$\hat{x}=x$ or there exists an edge $\{u,v\}  \in E \setminus \hat{E}$ such that $x \notin \hat{V}$  and $v = \hat{x}\in \hat{V}$.
		\item
		$\hat{y}=y$ or there exists an edge $\{w,z\}\in E \setminus \hat{E}$ such that $w \notin \hat{V}$  and $z = \hat{y}\in \hat{V}$.
	\end{enumerate}
	Then, $\hat{G}$ is called a $p$-graph of $G$.
\end{definition}

A generalized series-parallel graph $G$ can be represented by a binary parse tree $T$ which is defined as follows.

\begin{definition}[Binary Parse Tree for GSP Graphs~\cite{kikuno1983linear}]
	A binary parse tree for GSP $G$  is defined recursively as follows:
	\begin{enumerate}
		\item
		A tree consisted of a single vertex labeled $(u,v)_i$ is a binary parse tree for primitive  GSP  $G=(\{u,v\},\{u,v\}, u,v)$.
		\item
		Let $G=(V,E)$ be a GSP by some composition of two other GSPs $G_1$ and $G_2$, and  $T_1$ and $T_2$ be their binary parse trees, respectively. Then, a binary parse tree for $G$ is a tree with the root $r$ labeled as either  $(u,v)_s$, $(u,v)_p$ or $(u,v)_g$ depending on which operation is applied to generate $G$. Vertices $u$ and $v$ are terminals of $G$ and roots of $T_1$ and $T_2$ are the left and the right children of $r$, respectively.
		
		It is obvious that in any binary parse tree for a GSP $G$, every internal vertex of the tree  has exactly two children and there are $\vert E \vert$  leaves.
		\begin{remark}
		Note that when we use a label $(x,y)$, we do not care about the label being either $(x,y)_i,(x,y)_s,(x,y)_p$ and $(x,y)_g$.
		\end{remark}
		
		With $t$ be an internal vertex of a binary parse tree $T$ and for the
GSP $G$, let $\tau(t)$ denote the subtree of $T$ rooted at $t$.  Also, the left and the right subtree of $t$  are denoted by  $\tau_l(t)$  and  $\tau_r(t)$,  respectively. Then, the vertices of $T$  are labeled as follows:
		
		\begin{enumerate}
			\item
			For each edge $e=\{x,y\} \in E$, there exists exactly one leaf which is labeled by $(x,y)$ in $T$.
			\item
			For each internal vertex $t \in V_T$  that is  labeled by $(x,y)_s$, the root of $\tau_l(t)$  is labeled by $(x,z)$ and the root of $\tau_r(t)$  is labeled by $(z,y)$, where $z$ is some vertex in $V$. These vertices are called $s$-vertices.
			\item
			For each internal vertex $t \in V_T$  with label $(x,y)_p$, the root of $\tau_l(t)$ and $\tau_r(t)$  are labeled by $(x,y)$. These vertices are called $p$-vertices.
			\item
			For each internal vertex $t \in V_T$ that are labeled by $(x,z)_g$, the root of $\tau_l(t)$  is labeled by $(x,z)$  and the root of $\tau_r(t)$ is labeled by $(z,y)$ where $z$ is a vertex in $V$. These vertices are called $g$-vertices.
		\end{enumerate}
	\end{enumerate}
	
\end{definition}
Figure~\ref{parsetree-g} illustrates a binary parse tree for a GSP. Note that a binary parse tree for a GSP is not necessarily unique and can be computed by a linear time algorithm.
\begin{figure}[!h]
	\centering
	\includegraphics[width=14cm]{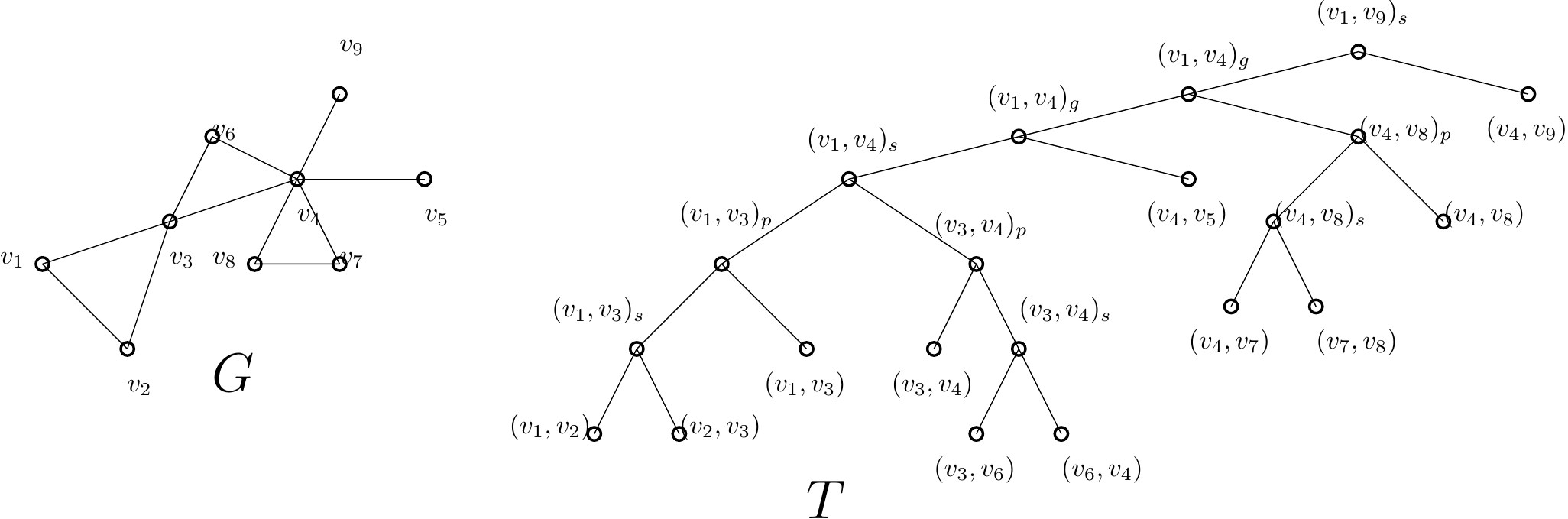}
	\caption{ A GSP graph and a binary parse tree for it.}\label{parsetree-g}
\end{figure}

\begin{lemma}[~\cite{hopcroft1973dividing}]
	For a given  GSP  $G$, a binary parse tree can be found in linear time.
\end{lemma}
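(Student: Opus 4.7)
The plan is to design a bottom-up reduction procedure that simultaneously recognizes whether the input graph is a GSP and constructs a parse tree witnessing that recognition. Roughly, I would invert the four composition rules: each structural pattern in $G$ corresponds to an internal node of the parse tree, and contracting the pattern both shrinks $G$ and attaches the corresponding subtrees to a new internal node.

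First I would preprocess $G=(V,E,s,t)$ to build adjacency lists augmented with, for every vertex $v$, its current degree and a pointer into a bucket structure grouping vertices by degree. I would then maintain three work queues: (i) \emph{series candidates}, i.e.\ non-terminal degree-$2$ vertices $z$ with distinct neighbors, whose two incident edges $(x,z),(z,y)$ can be replaced by a single virtual edge $(x,y)$ while creating an $s$-node with children given by the parse trees of the two removed edges; (ii) \emph{parallel candidates}, i.e.\ multiple edges between the same pair of endpoints $(x,y)$, collapsed into one virtual edge under a $p$-node; and (iii) \emph{generalized-series candidates}, i.e.\ cut-edges $\{x,z\}$ where the component beyond $z$ (with $z$ playing the role of its $s$-terminal) has already been reduced to a single virtual edge, which can be absorbed into $(x,z)$ under a $g$-node. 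The leaves of the growing parse tree are the original edges, labeled as in rule (1).

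Next I would iterate: repeatedly pop a candidate from any non-empty queue, apply the corresponding reduction, write the newly created internal node into the partially built parse tree, and push into the queues any neighbors whose local structure has become reducible. The key invariant is that, after each reduction, the remaining multigraph together with the already-built partial parse forest still represents the same original graph under the GSP composition rules, and that a GSP with $\geq 2$ edges always admits at least one reduction of type $o_s$, $o_p$, or $o_g$ away from its terminals — this is the combinatorial content I would verify by case analysis on the root operation of any parse tree for the current reduced graph. If at some point no reductions are available but $G$ has not collapsed to a single edge $\{s,t\}$, the input is not a GSP; otherwise the final reduction produces the root of the parse tree.

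The main obstacle is achieving \emph{linear} rather than merely polynomial time, and here the argument splits in two. For correctness, one must show that the generalized-series rule does not interact badly with series/parallel rules — in particular, that eagerly reducing $o_g$ candidates does not hide an $o_s$ or $o_p$ opportunity; this follows because an $o_g$-reduction only removes a pendant virtual edge and affects just the degree of its attachment vertex. For complexity, each reduction destroys at least one vertex or one edge and updates $O(1)$ degree counters and queue memberships, so the total work is $O(|V|+|E|)$ provided the queues are implemented with constant-time push/pop and the adjacency structure supports constant-time edge deletion; this is exactly the data-structural setup used by Hopcroft and Tarjan in~\cite{hopcroft1973dividing}, which I would invoke to close the argument without reproducing the low-level bookkeeping.
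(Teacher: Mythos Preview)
The paper gives no proof of this lemma: it is stated with a citation to Hopcroft--Tarjan and used thereafter as a black box. Your proposal therefore goes well beyond what the paper does, sketching an explicit reduction-based construction rather than appealing to the literature.

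For the $o_s$ and $o_p$ rules your sketch is the standard one and is fine. The treatment of $o_g$, however, has a genuine gap. Your generalized-series candidate is ``a cut-edge $\{x,z\}$ such that the component beyond $z$ has already been reduced to a single virtual edge,'' which you then absorb into $(x,z)$. But consider the GSP on the path $s\text{--}a\text{--}b\text{--}t$ with an extra pendant edge $(a,c)$ and terminals $s,t$. The only non-terminal degree-$2$ vertex is $b$; after the series reduction at $b$ the vertex $a$ has degree $3$ (neighbours $s$, $c$, and $t$ via the new virtual edge), there are no parallel edges, and no cut-edge of your form exists --- yet the graph has not collapsed to a single edge. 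Your three queues are empty and the procedure stalls on a genuine GSP.

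The correct inverse of $o_g$ is simply to excise a non-terminal \emph{pendant} vertex $w$ together with its unique incident (virtual) edge $(z,w)$. The difficulty is that the resulting $g$-node cannot be built at the moment of excision: by definition its left child must be the parse tree of a subgraph whose \emph{second} terminal is $z$, and that subgraph is in general not yet reduced. One must instead record the excised parse tree as a pending $g$-attachment at $z$ and discharge it later --- typically when a series reduction is about to eliminate $z$, at which point the appropriate child of the new $s$-node gets wrapped in the deferred $g$-node. This deferral mechanism (and the check that it still costs $O(1)$ amortized per reduction) is exactly the nontrivial part of handling $o_g$, and your sketch does not address it.
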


\section{A dynamic programing algorithm for finding a minimum mixed dominating set}\label{Sec3}
We now define necessary notations that are used throughout this section. Then, we present our proposed algorithm and its procedures to find a $\gamma_{m}$-set, count them and $\gamma_m(G)$ for a given GSP $G$.

Let $t$ be a vertex in a parse tree $T$ corresponding to a GSP $G$ and $\hat{G}$ be a $p$-graph of subtree with root $t$. We define the sets $ch(t)$ and $\mathcal{MMD}_{i,j}(x,y)$ as follow:

$-$ The set $ch(t)$ consists of all children of $t$. In other words, in a  parse tree $T$, if $t$ is a leaf vertex, then  $ch(t)$ is an empty set and if it is an internal node, then $ch(t)$ contains two elements.

$-$ Let $(x,y)$ be the label of $t$ and $i,j \in \{0,1,2,3,4,5,6\}$. The set $\mathcal{MMD}_{i,j}(x,y)$ is an arbitrary $\gamma_m$-set  for $\hat{G}$  such that $(x,y)$ is the label of vertex $t$ and $i,j$ satisfy  one of the following conditions:

\begin{itemize}
	
	\item[\textbf{Case 0.}]
	 If $i=0$, then $x\in \mathcal{MMD}_{i,j}(x,y)$ and  at least one of its incident edges like $e$ are in $\mathcal{MMD}_{i,j}(x,y)$.
	 \item[\textbf{Case 1.}]
	 If $i=1$, then $x\in \mathcal{MMD}_{i,j}(x,y)$ and none of its incident edges are in $\mathcal{MMD}_{i,j}(x,y)$.
	\item[\textbf{Case 2.}]
	If $i=2$, then $x\notin \mathcal{MMD}_{i,j}(x,y)$ and at least one of its incident edges like $e$ are in $\mathcal{MMD}_{i,j}(x,y)$.
	\item[\textbf{Case 3.}]
	If $i=3$, then $x\notin \mathcal{MMD}_{i,j}(x,y)$ and  none of its incident edges are in $\mathcal{MMD}_{i,j}(x,y)$ since all of them are dominated by an edge or a vertex in $\mathcal{MMD}_{i,j}(x,y)$. Moreover, there is a vertex like $x'$ in $\mathcal{MMD}_{i,j}(x,y)$ such that $\{x,x'\} \in E(\hat{G})$.
	\item[\textbf{Case 4.}]
	If $i=4$, then $x\notin \mathcal{MMD}_{i,j}(x,y)$ and at least one of its incident edges are not dominated. Moreover, there is a vertex like $x'$ in $\mathcal{MMD}_{i,j}(x,y)$ such that $\{x,x'\} \in E(\hat{G})$.
	\item[\textbf{Case 5.}]
	If $i=5$, then $x\notin \mathcal{MMD}_{i,j}(x,y)$ and  none of its incident edges are in $\mathcal{MMD}_{i,j}(x,y)$ since all of them are dominated by an edge or a vertex in $\mathcal{MMD}_{i,j}(x,y)$. Moreover, there is no vertex like $x'$ in $\mathcal{MMD}_{i,j}(x,y)$ such that $\{x,x'\} \in E(\hat{G})$.
	\item[\textbf{Case 6.}]
	If $i=6$, then $x\notin \mathcal{MMD}_{i,j}(x,y)$ and none of its incident edges are in $\mathcal{MMD}_{i,j}(x,y)$ and at least one of them is not dominated. Moreover, there is not vertex like $x'$ in $\mathcal{MMD}_{i,j}(x,y)$ such that $\{x,x'\} \in E(\hat{G})$.
\end{itemize}
In a similar way, we can define these situations for $y$ based on $j$.

\newcommand{\Minsize}{\mbox{minSize}}
The function \Minsize\ is defined such that it receives a number of sets as input and returns a minimum sized set among them.

Using these definitions, our proposed algorithm constructs  $\mathcal{MMD}_{i,j}(x,y)$ as a mixed dominating set
with minimum cardinality for $G(x,y)$ and computes $\mathcal{N}_{i,j}(x,y)$ as the number of such sets.
At the end of algorithm, $\gamma_m$-set, $\gamma_m(G)$ and $\mathcal{N}_{\gamma_m}(G)$ are determined where $\gamma_m$-set is an smallest mixed dominating set and $\mathcal{N}_{\gamma_m}(G)$ is the number of mixed dominating sets with minimum size.

Now, we are ready to state our algorithm. The input of algorithm is a GSP  $G$. At
the first step of this algorithm, we find  a binary  parse tree like $T$ for $G$ by
some known linear time algorithms such as the one in~\cite{hopcroft1973dividing}. Next, our  algorithm traverses $T$ in bottom-up order.
Each subtree of parse tree  corresponds to a $p$-graph for $G$ and each vertex $t$ of $T$ is labeled by either $(x,y)_{i}$, $(x,y)_s$, $(x,y)_p$ or $(x,y)_g$ such that $x$ and $y$ are terminals in the corresponding $p$-graph of $\tau(t)$.
 
 For each visiting vertex $t$, one of the procedures ProcessLeaf, ProcessSvertex,
ProcessPvertex and ProcessGvertex is called. Input to each procedure consists of $x$ and $y$. 
 By traversing parse tree $T$ and visiting nodes of $T$ and calling proper procedures,
we find a subset of $\mathcal{MMD}_{i,j}(x,y)\subseteq V(\hat{G})$ such that for each
$i,j \in \{0,1,2,3,4,5,6\}$, $\mathcal{MMD}_{i,j}(x,y)$ stores a minimum mixed
dominating set of $\hat{G}$ with the assumption that $x$, $y$ or some of their incident edges cannot be dominated. 
 
 After visiting the root $r$ of $T$ and computing $\mathcal{MMD}_{i,j}(x,y)$ for $r$, a $\gamma_m$-set for $G$ can be found.
  It is enough to return a $\mathcal{MMD}_{i,j}(x,y)$ with minimum cardinality when {$i,j\in \{0,1,2,3\}$}.

The input of the ProcessLeaf procedure is a leaf vertex $v \in V_T$ which is labeled by $ (x,y)_{leaf}$ and its output is a set  $\mathcal{MMD}_{i,j}(x,y)$ for $i,j \in \mathcal{M}$. Note that a leaf corresponds  to an edge $\{x,y\}$ in $G$. 
For different $i,j\in \mathcal{M}$, we have summarized all valid cases as follows:

\begin{itemize}
	\item[1.]
	The vertices $x$, $y$ and the edge $\{x,y\}$ are dominated and at least one of them is a member of $\mathcal{MMD}_{i,j}(x,y)$.  So  $i,j$ satisfies one of following condition:
	\let\labelitemi\labelitemii
	\begin{itemize}
		\item 		$i=0$ and $j\in \{0,2\}$,
		\item
		 $i=1$ and $j\in\{1,3\}$,
		 \item
		  $i=2$ and $j\in \{0,2\}$,
		   \item
		   $i=3$ and $ j=1$.
	\end{itemize} 
	\item[2.]
The vertices $x$, $y$ and edge  $\{x,y\}$ are not dominated and  are not  members of $\mathcal{MMD}_{i,j}(x,y)$,  so we have  $i=j=6$.

	\end{itemize} 

Let $v$ is a vertex of $T$  labeled by $(x,y)_s$. In the ProcessSvertex procedure, we compute the set $\mathcal{MMD}_{i,j}(x,y)$ 
for given terminal vertices $x$, $y$ and common vertex $z$.

The sets $\mathcal{MMD}_{i_\ell,j_\ell}^\ell(x,z)$ and $\mathcal{MMD}_{i_r,j_r}^r(z,y)$ are the sets in correspondence to  $\tau_l(t)$ and $\tau_r(t)$ where the roots of $\tau_l(t)$ and $\tau_r(t)$ are labeled by $(x,z)$ and $(z,y)$, respectively.
We know that members of  $\mathcal{MMD}_{i_\ell,j_\ell}^\ell(x,z)$ and $\mathcal{MMD}_{i_r,j_r}^r(z,y)$  are those vertices of $T$ which are corresponding to $p$-graphs $G_1=(V_1,E_1,x,z)$, $G_2=(V_2,E_2,z,y)$ and $\hat{G}=G_1o_s G_2=(\hat{V},\hat{E},x,y)$, respectively.

Whether or not $z$ belongs to $\mathcal{MMD}_{i,j}(x,y)$ and which vertex or edge dominates $z$, the cases that can occur are summarized in Table 3. To be precise, consider the following cases:

\begin{itemize}
	\item[\textbf{Case 0.}]
	Vertex $z$ and at least one of its incident edges belong to $\mathcal{MMD}_{i,j}(x,y)$. So, we have   $(j_{\ell},i_r)\in \{(0,0),(0,1),(1,0)\}$.	
	\item[\textbf{Case 1.}]
	Vertex $z \in \mathcal{MMD}_{i,j}(x,y)$ and non of its incident edges belong to   $\mathcal{MMD}_{i,j}(x,y)$  which implies that $j_{\ell}=i_r=1$.	
	\item[\textbf{Case 2.}]
	Vertex $z \notin \mathcal{MMD}_{i,j}(x,y)$ and  an edge incident to $z$ belong to $\mathcal{MMD}_{i,j}(x,y)$. So, we have  $(j_\ell, i_r)\in \{2\}\times\{2,3,4,5,6\}$,  or $(i_r,j_\ell)\in \{2\}\times\{2,3,4,5,6\})$.
	\item[\textbf{Case 3.}]
	Vertex $z$ and  its incident edges does  not belong to $\mathcal{MMD}_{i,j}(x,y)$. So, we have 
	
	  $(j_\ell,i_r)\in\{(3,3),(3,5),(5,3)\}$. 	
\end{itemize}

\begin{algorithm}[h!]
	\begin{algorithmic}[1]
		\Procedure{ProcessSvertex}{$x,z,y$}
		\ForAll{$i,j\in M$}
		\State $setlist\gets \emptyset;$
		
			\State $Min \gets \mathcal{MMD}_{i,0}^\ell(x,z)\cup \mathcal{MMD}_{0,j}^r(z,y)$
			
			\State $\mathcal{N}_{i,j}(x,y) \gets \mathcal{N}_{i,0}^\ell(x,z)\times \mathcal{N}_{0,j}^r(z,y)$
				
		\ForAll{$(j_\ell,i_r)\in \{0,1\}$}
		\State Add $\mathcal{MMD}_{i,j_\ell}^\ell(x,z)\cup \mathcal{MMD}_{i_r,j}^r(z,y)$ to $setlist$;
		\State ProcessCalNum ($Min$, $\mathcal{N}_{i,j}(x,y)$, $\mathcal{MMD}_{i,j_\ell}^\ell(x,z)$, $ \mathcal{MMD}_{i_r,j}^r(z,y)$,  $\mathcal{N}_{i,j_\ell}^\ell(x,z)$, $ \mathcal{N}_{i_r,j}^r(z,y)$)
		\EndFor
		\ForAll{$(i_r)\in \{2,3,4,5,6\}$}
		\State Add $\mathcal{MMD}_{i,2}^\ell(x,z)\cup \mathcal{MMD}_{i_r,j}^r(z,y)$ to $setlist$;
		\State ProcessCalNum ($Min$, $\mathcal{N}_{i,j}(x,y)$, $\mathcal{MMD}_{i,j_\ell}^\ell(x,z)$, $ \mathcal{MMD}_{i_r,j}^r(z,y)$,  $\mathcal{N}_{i,j_\ell}^\ell(x,z)$, $ \mathcal{N}_{i_r,j}^r(z,y)$)
		\EndFor
		\ForAll{$(j_\ell)\in \{2,3,4,5,6\}$}
		\State Add $\mathcal{MMD}_{i,j_\ell}^\ell(x,z)\cup \mathcal{MMD}_{2,j}^r(z,y)$ to $setlist$;
	\State ProcessCalNum ($Min$, $\mathcal{N}_{i,j}(x,y)$, $\mathcal{MMD}_{i,j_\ell}^\ell(x,z)$, $ \mathcal{MMD}_{i_r,j}^r(z,y)$, 
		  $\mathcal{N}_{i,j_\ell}^\ell(x,z)$, $ \mathcal{N}_{i_r,j}^r(z,y)$)
		\EndFor
		\State Add $\mathcal{MMD}_{i,3}^\ell(x,z)\cup \mathcal{MMD}_{3,j}^r(z,y)$ to $setlist$;
		\State ProcessCalNum ($Min$, $\mathcal{N}_{i,j}(x,y)$, $\mathcal{MMD}_{i,j_\ell}^\ell(x,z)$, $ \mathcal{MMD}_{i_r,y_j}^r(z,y)$,  $\mathcal{N}_{i,j_\ell}^\ell(x,z)$, $ \mathcal{N}_{i_r,j}^r(z,y)$)
		\State Add $\mathcal{MMD}_{i,3}^\ell(x,z)\cup \mathcal{MMD}_{5,j}^r(z,y)$ to $setlist$;
		\State ProcessCalNum ($Min$, $\mathcal{N}_{i,j}(x,y)$, $\mathcal{MMD}_{i,j_\ell}^\ell(x,z)$, $ \mathcal{MMD}_{i_r,j}^r(z,y)$,  $\mathcal{N}_{i,j_\ell}^\ell(x,z)$, $ \mathcal{N}_{i_r,j}^r(z,y)$)
		\State Add $\mathcal{MMD}_{i,5}^\ell(x,z)\cup \mathcal{MMD}_{3,j}^r(z,y)$ to $setlist$;
	\State ProcessCalNum ($Min$, $\mathcal{N}_{i,j}(x,y)$, $\mathcal{MMD}_{i,j_\ell}^\ell(x,z)$, $ \mathcal{MMD}_{i_r,j}^r(z,y)$,  $\mathcal{N}_{i,j_\ell}^\ell(x,z)$, $ \mathcal{N}_{i_r,j}^r(z,y)$)
		\State $\mathcal{MMD}_{i,j}(x,y)\gets \Minsize(setlist);$
		\EndFor
		\EndProcedure
	\end{algorithmic}
\end{algorithm}

Now, let $v$ be a vertex of $T$ labeled by $(x,y)_p$.  The sets corresponding to  $\tau_l(t)$ and $\tau_r(t)$ are $\mathcal{MMD}_{i_\ell,j_\ell}^\ell(x,y)$ and $\mathcal{MMD}_{i_r,j_r}^r(x,y)$, respectively.

For each $i,j \in M$, we describe a method to find $\mathcal{MMD}_{i,j}(x,y)$. Note that it is enough to find a relation between values of $(i,j)$,  $(i_\ell,j_\ell)$ and $(i_r,j_r)$.
To do so, we use the procedure FindList.
  Let the input of this procedure be a value like $i \in M$. Then,  the procedure returns a set of pairs which  are proper values for $i_\ell$ and $i_r$.  Note that for $j \in M$, the procedure returns proper  $j_\ell$ and $j_r$, similarly.

\begin{algorithm}[h!]
	\begin{algorithmic}[1] 
		\Procedure{FindList}{$k$}
		\State $M \gets \{0,1,2,3,4,5,6\}$
		\ForAll{$k\in M$}
		\State $list\gets \emptyset$
		\EndFor
		\Switch{$k$}
		\Case{$0$}
		\State $list\gets \{(0,0),(0,1),(1,0)\}$
		\EndCase
		\Case{$1$}
		\State $list\gets \{(1,1)\}$
		\EndCase
		\Case{$2$}
		\ForAll{$k'\in M \setminus\{0,1\}$}
		\State Add $(2,k')$ to $list$
		\State Add $(k',2)$ to $list$
		\EndFor
		\EndCase
		\Case{$3$}
		\State $list\gets \{(3,3),(3,5),(5,3)\}$
		\EndCase
		\Case{$4$}
		\ForAll{$k'\in M \setminus\{0,1,2\}$}
		\State Add $(4,k')$ to $list$
		\State Add $(k',4)$ to $list$
		\EndFor
		\EndCase
		\Case{$5$}
		\State Add $(5,5)$ to $list$
		\EndCase
		\Case{$6$}
		\State Add $(5,6),(6,5),(6,6)$ to $list$
		\EndCase	
		\EndSwitch
		\EndProcedure
	\end{algorithmic}
\end{algorithm}

\begin{algorithm}[h!]
	\begin{algorithmic}[1]
		\Procedure{ProcessPvertex}{$x,y$}
		\ForAll{$i,j\in M$}
		\State $list1\gets  ProcessFindlist(i)$
		\State $list2 \gets ProcessFindlist(j)$
		\State $(i_{\ell_1},j_{\ell_1}),(i_{r_1},j_{r_1}) \gets \text{an arbitrary element of } list1 \times list2$
		\State $Min \gets \mathcal{MMD}_{i_{\ell_1},j_{\ell_1}}^\ell(x,y)\cup \mathcal{MMD}_{i_{r_1},j_{r_1}}^r(x,y)$
		\State $\mathcal{N}_{i,j}(x,y) \gets \mathcal{N}_{i_{\ell_1},j_{\ell_1}}^\ell(x,y)\times \mathcal{N}_{i_{r_1},j_{r_1}}^r(x,y)$
		\ForAll{$(i_{\ell},j_{\ell}),(i_r,j_{r}) \in list1 \times list2$}
		\State Add $\mathcal{MMD}_{i_{\ell_1},j_{\ell_1}}^\ell(x,y)\cup \mathcal{MMD}_{i_r,j_r}^r(x,y)$ to $setlist$;
		\State ProcessCalNum ($Min$, $\mathcal{N}_{i,j}(x,y)$, $\mathcal{MMD}_{i_\ell,j_\ell}^\ell(x,y)$, $ \mathcal{MMD}_{i_r,j_r}^r(x,y)$,  $\mathcal{N}_{i_\ell,j_\ell}^\ell(x,y)$, $ \mathcal{N}_{i_r,j_r}^r(x,y)$)
		\EndFor
		\State $\mathcal{MMD}_{i,j}(x,y)\gets \Minsize(setlist);$
		\EndFor
		\EndProcedure
	\end{algorithmic}
\end{algorithm}

 Note that $\tau_l(t)$, $\tau_r(t)$ and $\tau(t)$   correspond  to $p$-graphs $G_1=(V_1,E_1,x,y)$, $G_2=(V_2,E_2,x,y)$ and $\hat{G}=G_1 o_pG_2=(\hat{V},\hat{E},x,y)$, respectively.
   Now, for $i\in \mathcal{M}$ (resp. $j\in \mathcal{M}$) the values of $i_\ell$ and $i_r$ (resp. $j_\ell$ and $j_r$)  are determined  as follows (they are also shown in Table~\ref{pvert}).

 \begin{itemize}
 \item[\textbf{Case 0.}]
$i=0$ implies  $(i_{\ell},i_r)\in \{(0,0),(0,1),(1,0)\}$
  \item[\textbf{Case 1.}]
 $i=1$ implies $i_{\ell}=i_r=1$,
\item[\textbf{Case 2.}]
$i=2$ implies  $(i_{\ell},i_r)\in \{2\}\times\{2,3,4,5,6\}$  or $(i_{\ell},i_r)\in \{2,3,4,5,6\}\times \{2\}$,
\item[\textbf{Case 3.}]
$i=3$ implies  $(i_{\ell},i_r)\in \{(3,3),(3,5),(5,3)\}$

\item[\textbf{Case 4.}]
$i=4$ implies  $(i_{\ell},i_r)\in \{(3,4), (3,6),(4,4),(4,5),(4,6)\}$ or $(i_r,i_\ell)\in \{(3,4), (3,6),(4,4),(4,5),(4,6)\}$ 

\item[\textbf{Case 5.}]
$i=5$ implies  $i_{\ell}=i_r=5$.

\item[\textbf{Case 6.}]
$i=6$ implies    $(i_{\ell},i_r)\in \{(5,6),(6,5),(6,6)\}$.
 \end{itemize}

Let $v$ be a vertex of $T$ which is labeled by $(x,y)_g$. In  procedure ProcessGvertex,
the set $\mathcal{MMD}_{i,j}(x,y)$ is computed for the given vertices $x$ and $y$. 
The sets corresponding to  $\tau_l(t)$ and $\tau_r(t)$ are $\mathcal{MMD}_{i,j_\ell}^\ell(x,y)$ and $\mathcal{MMD}_{i_r,j_r}^r(x,y)$, respectively.

\begin{algorithm}[h!]
	\begin{algorithmic}[1]
		\Procedure{ProcessGvertex}{$x,y$}
		\ForAll{$i,j\in M$}
		\State $list1\gets  ProcessFindlist(j);$
					\State $j_{\ell_1},i_{r_1} \gets \text{an arbitrary element of } list1$
			\State $Min \gets \mathcal{MMD}_{i,j_{\ell_1}}^\ell(x,z)\cup \mathcal{MMD}_{i_{r_1},j_r}^r(z,y)$
						\State $\mathcal{N}_{i,j}(x,y) \gets \mathcal{N}_{i,j_{\ell_1}}^\ell(x,z)\times \mathcal{N}_{i_{r_1},j_r}^r(z,y)$
				\ForAll{$((j_{\ell},i_r),j_{r}) \in list1 \times \{0,1,2,3\}$}
		\State Add $\mathcal{MMD}_{i,j_\ell}^\ell(x,y)\cup \mathcal{MMD}_{i_r,j_r}^r(x,y)$ to $setlist$;
		\State ProcessCalNum ($Min$, $\mathcal{N}_{i,j}(x,y)$, $\mathcal{MMD}_{i,j_\ell}^\ell(x,z)$, $ \mathcal{MMD}_{i_r,j_r}^r(x,y)$,  $\mathcal{N}_{i,j_\ell}^\ell(x,z)$, $ \mathcal{N}_{i_r,j_r}^r(x,y)$)
		\EndFor
				\State $\mathcal{MMD}_{i,j}(x,y)\gets \Minsize(setlist);$
		\EndFor
		\EndProcedure
	\end{algorithmic}
\end{algorithm}
             
Let the roots of $\tau_l(t)$ and $\tau_r(t)$ be labeled by $(x,y)$ and $(y,z)$, respectively, for some $z \in V$ and  $\mathcal{MMD}_{i_\ell,j_\ell}^\ell(x,y)$ and $\mathcal{MMD}_{i_r,j_r}^r(x,y)$ are the  associated ones with the vertices $(x,y)$ and $(y,z)$ of $T$.
Obviously, $z$ does not appear in ancestors of $t$ in parse tree. So, $z$ and all of its incident edges must be closely dominated which implies $j\in\{0,1,2,3\}$. Since $y$ is the common vertex between $G_1$ and $G_2$, according to $j$, the set denoted by \textit{list1} which equals the set of  possible pairs can be computed for $j_{\ell}$ and $i_r$ by procedure Findlist. Several  cases are possible for $y$ which are shown in Table~\ref{gvert} and are discussed below:

\begin{itemize}
	\item[\textbf{Case 0.}]
		$j=0$ implies  $(j_{\ell},i_r)\in\{(0,0),(1,0),(0,1)\}$.
	\item[\textbf{Case 1.}]
	$j=1$ implies  $j_{\ell}=i_r=1$.
	
	\item[\textbf{Case 2.}]
	$j=2$ implies   $(j_{\ell},i_r)\in \{2\}\times \{2,3,4,5,6\}$  or $(j_{\ell},i_r)\in \{2,3,4,5,6\}\times\{2\}$.
	\item[\textbf{Case 3.}]
	$j=3$ implies   $(j_{\ell},i_r)\in\{(3,3),(3,5),(5,3)\}$.
	\item[\textbf{Case 4.}]
	$j=4$ implies    $(j_{\ell},i_r)\in \{3\}\times \{4,6\}$, $(j_{\ell},i_r)\in \{4\}\times \{4,5,6\}$, $(i_r,j_{\ell})\in \{3\}\times \{4,6\}$ or $(i_r,j_{\ell})\in \{4\}\times \{4,5,6\}$.  
		\item[\textbf{Case 5.}]
	$j=5$ implies   $j_{\ell}=i_r=5$.
	\item[\textbf{Case 6.}]
	$j=6$ implies   $(j_{\ell},i_r)\in\{(5,6),(6,5),(6,6)\}$.
\end{itemize}

By $\mathcal{MMD}_{i,j}(x,y)\leftarrow \Minsize(Setlist)$, we remove all of undefinable sets from $Setlist$. If $Setlist$ is empty, then 
$\mathcal{MMD}_{i,j}(x,y)$ becomes undefinable.

\begin{algorithm}[h!]
\begin{algorithmic}[1]
	\Procedure{ProcessCalNum}{$Min$,$N$, $S_1$, $S_2$, $s'_1$, $s'_2$}
		\If{$\vert S_1\cup S_2 \vert \leq Min$}
	\State $N \gets s'_1 \times s'_2$
	\ElsIf {$\vert S_1 \cup S_2 \vert= Min$}
	\State $N \gets N+ s'_1 \times s'_2$
	\EndIf
	\EndProcedure
\end{algorithmic}
\end{algorithm}

\begin{algorithm}[h!]
	\begin{algorithmic}[1]
\State Find a parse tree of $G$ like $T$
		\For  { each $v$ in a post order traverse of parse tree}
		\Switch{$type\;of\;v$}
		\Case{$Leaf$}
		\State ProcessLeaf$(x,y)$
		\Comment{$(x,y)_{i}$ is the label of $v$}
		\EndCase
		\Case{$s-vertex$}
		\State ProcessSvertex$(x,z,y)$
		\Comment{$(x,y)_s$, $(x,z)$ and $(z,y)$ are labels of $v$, left and right child of $v$, respectively. }
		\EndCase
				\Case{$p-vertex$}
		\State ProcessPvertex$(x,y)$
		\Comment{$(x,y)_p$ is the label of $v$ and the labels of left and right child of $v$  is sequels $(x,y)$. }
		\EndCase
				\Case{$g-vertex$}
		\State ProcessGvertex$(x,y,z)$
		\Comment{$(x,y)_s$, $(x,y)$ and $(y,z)$ are labels of $v$, left and right child of $v$, respectively.}
		\EndCase
		\EndSwitch
		\EndFor
		\State $D\gets \emptyset$
		\State $Min\gets \mathcal{MMD}_{0,0}(x,y)$
		\ForAll{$i,j\in \{0,1,2,3\}$}
		\State Add $\mathcal{MMD}_{i,j}(x,y)$ to $D$
		\If{$\vert\mathcal{MMD}_{i,j}(x,y) \vert \leq Min$}
				\State $N_{\gamma_{m}} \gets N(x_{i},y_{j})$
		\ElsIf {$\vert \mathcal{MMD}_{i,j}(x,y) \vert= Min$}
		\State $N_{\gamma_{m}}  \gets N_{\gamma_{m}} + N(x_{i},y_{j})$
		\EndIf
				\EndFor
		\State $\gamma_{m}$-set$\gets \Minsize(D)$
		\State $\gamma_{m}(G) \gets \vert \gamma_{m}$-set $\vert$
	\end{algorithmic}
\end{algorithm}

\begin{algorithm}[h!]
\caption{:Finding a $\gamma_{m}$-sets of a GSP}\label{findset}
	\begin{algorithmic}[1]
		\Procedure{ProcessLeaf}{$x,y$}
		\ForAll{$i,j\in \{0,1,2,3,4,5,6\}$ }
		\State $\mathcal{MMD}_{i,j}(x,y) \gets NaN; $
		\State $\mathcal{N}_{i,j}(x,y)=0; $
		\EndFor
		\State $\mathcal{MMD}_{0,0}(x,y)\gets \{x,y,xy\} $ and  $\mathcal{N}_{0,0}(x,y)=1; $
		\State $\mathcal{MMD}_{0,2}(x,y)\gets \{x,xy\} $ and $\mathcal{N}_{0,2}(x,y)=1; $
		\State $\mathcal{MMD}_{1,1}(x,y)\gets \{x,y\} $ and  $\mathcal{N}_{1,1}(x,y)=1; $
		\State $\mathcal{MMD}_{1,3}(x,y)\gets \{x\}$ and $\mathcal{N}_{1,3}(x,y)=1; $
		\State $\mathcal{MMD}_{2,0}(x,y)\gets \{y,xy\} $ and $\mathcal{N}_{2,0}(x,y)=1; $
		\State $\mathcal{MMD}_{2,2}(x,y)\gets \{xy\} $ and $\mathcal{N}_{2,2}(x,y)=1; $
		\State $\mathcal{MMD}_{3,1}(x,y)\gets \{y\} $ and $\mathcal{N}_{3,1}(x,y)=1; $
		\State $\mathcal{MMD}_{6,6}(x,y)\gets \emptyset $ and $\mathcal{N}_{6,6}(x,y)=1. $
				\EndProcedure
	\end{algorithmic}
\end{algorithm}

The correctness and complexity of our proposed algorithm is discussed below.

\begin{theorem}\label{Complexity}
	For a given generalized series-parallel graph $G=(V,E)$, Algorithm~\ref{findset} finds a $\gamma_m$-set for $G$ in time $O(\vert V \vert)$.
	\end{theorem}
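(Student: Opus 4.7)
The plan is to split the proof into a correctness argument, proved by structural induction on the parse tree $T$, and a running-time analysis obtained by counting nodes and bounding the work per node. The induction invariant is: after the corresponding procedure has been executed at a node $t$ with label $(x,y)$, for every $(i,j)\in\{0,\ldots,6\}^2$ the entry $\mathcal{MMD}_{i,j}(x,y)$ stores a minimum-cardinality mixed dominating set of the $p$-graph $\hat{G}$ induced by $\tau(t)$ that satisfies Case $i$ at $x$ and Case $j$ at $y$ (or is marked undefined if no such set exists), and $\mathcal{N}_{i,j}(x,y)$ counts the number of such minimum sets. The base case is immediate: a leaf corresponds to the primitive GSP on the single edge $\{x,y\}$, whose only domination-relevant objects are $x$, $y$ and the edge itself, so the eight initializations in ProcessLeaf are exactly the feasible pairs $(i,j)$.

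For the inductive step I would treat the three internal-node types in turn, using a common principle: whenever a composition hides a vertex that no ancestor in $T$ will ever see again (the common vertex $z$ in $o_s$, and the forgotten terminal in $o_g$), that vertex and every edge incident to it must already be dominated inside $\hat{G}$. This restricts its local case to $\{0,1,2,3\}$, which is exactly the set enumerated by ProcessSvertex and ProcessGvertex (via the $j_\ell,i_r$ pairs supplied by FindList). For a $p$-vertex the two children share both terminals, so the case of $x$ in the parent is the ``union'' of the cases at $x$ in the two children; I would verify, for every target $i$, that the pairs $(i_\ell,i_r)$ returned by FindList produce precisely the combinations whose gluing realizes Case $i$ (for example, Case $0$ at $x$ requires $x$ to be in the set in both children and at least one incident edge to appear, which is exactly $\{(0,0),(0,1),(1,0)\}$). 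Minimality follows because every admissible combination is enumerated and \Minsize\ retains only the smallest union; counting is correct because ProcessCalNum aggregates products of child counts whenever the combined size matches the running minimum.

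For the running-time claim, the preceding lemma gives a parse tree $T$ in $O(|V|)$ time. Since each leaf of $T$ corresponds to an edge of $G$ and every internal node has exactly two children, $|V_T|=\Theta(|E|)$, and a generalized series-parallel graph satisfies $|E|=O(|V|)$. Each procedure iterates over the fixed index set $\{0,\ldots,6\}^2$ and, for each index, over a constant-size list of child-case combinations, so provided the $\mathcal{MMD}_{i,j}$ entries are stored implicitly (as back-pointers together with a cardinality and a count, with one concrete $\gamma_m$-set reconstructed by a single final top-down walk) the per-node work is $O(1)$. Multiplying these bounds yields the $O(|V|)$ total claimed.

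The main obstacle is the exhaustiveness of the inductive step: one must check that the case tables associated with ProcessSvertex, ProcessPvertex (together with FindList) and ProcessGvertex are simultaneously \emph{sound} (every listed combination yields a feasible MMD set of the advertised type) and \emph{complete} (every minimum MMD set of $\hat{G}$ in that class decomposes into child sub-solutions matching a listed combination). The subtle points are the distinctions between Cases $3$ and $5$ and between Cases $4$ and $6$, which turn on whether the terminal has an in-set neighbour inside the child subgraph; this bookkeeping must be tracked carefully across the three composition operations, and I would organise the verification in one table per operation, row by row.
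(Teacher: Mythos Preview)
Your proposal is correct and follows essentially the same approach as the paper: structural induction on the parse tree for correctness, followed by the node-count argument $|V_T|=\Theta(|E|)$ combined with $|E|=O(|V|)$ (the paper invokes planarity and $|E|\le 3|V|-6$ explicitly) and constant work per node. Your write-up is in fact more careful than the paper's own proof, which asserts correctness of the procedures without spelling out the induction invariant, and your remark about storing $\mathcal{MMD}_{i,j}$ entries as back-pointers plus cardinalities is a detail the paper omits but is needed for the $O(1)$-per-node claim.
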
	

\begin{proof}
In Algorithm ~\ref{findset}, we traverse the parse tree $T$ in a bottom-up fashion and computes at most $49$ sets for each internal vertex of them.
 Each initial set for leaves of the tree represents all possible mixed dominating sets in a graph consisting of only one edge. 
 Let $G_1 = (V_1, E_1)$ and $G_2=(V_2, E_2)$ be the graphs represented by the subtrees
$\tau_l(t)$ and $\tau_r(t)$. Assume that they are input to procedures ProcessLeaf,
ProcessSvertex,   ProcessPvertex and ProcessGvertex. It is easy to see that these
procedures   find all possible $\gamma_m$-sets in each corresponding graph. Finally,
our algorithm extracts only a valid minimum mixed dominating set. These steps of
algorithm require at most $\vert O(V_T) \vert$ operations. Since each binary tree with
$n$ leaves has $O(n)$ vertices and the binary parse tree of every GSP graph  has
$\vert E(G) \vert$ leaves, so $\vert V_T\vert \in O(\vert E (G)\vert)$. Every GSP
graph is planar, and in a planar graph we have $\vert E \vert \leq 3 \vert V \vert -
6$. Also, we know that a parse tree $T$ can be constructed in $O(\vert V \vert)$~\cite{kikuno1983linear}. So, the algorithm  computes a $\gamma_{m}$-set for a given GSP graph $G$ in time $O(\vert V \vert)$.
\end{proof}

\begin{table}
	\centering
	\caption{Different situations for  $s$-vertices. }\label{T3}
	\scalebox{.8}{
		\begin{tabular}{ r|c|c|c|c|c|c|c| }
			\multicolumn{1}{r}{}
			&  \multicolumn{1}{c}{Case}
			&  \multicolumn{1}{c}{(a)}
			&  \multicolumn{1}{c}{(b)}
			& \multicolumn{1}{c}{(c)} 
			&  \multicolumn{1}{c}{(d)}
			&  \multicolumn{1}{c}{(e) }
			\\
			\cline{2-7}
			& 0 & \begin{minipage}{.20\textwidth}
				\vspace{3mm}
				\includegraphics[width=\linewidth, height=25mm]{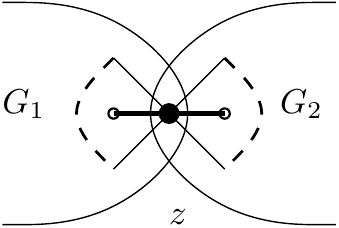}\vspace{3mm}
			\end{minipage} & \begin{minipage}{.20\textwidth}
			\includegraphics[width=\linewidth, height=25mm]{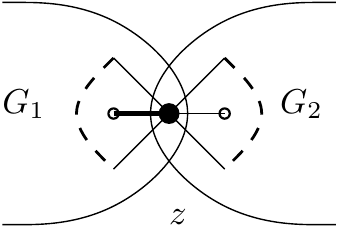}
		\end{minipage} & \begin{minipage}{.20\textwidth}
		\includegraphics[width=\linewidth, height=25mm]{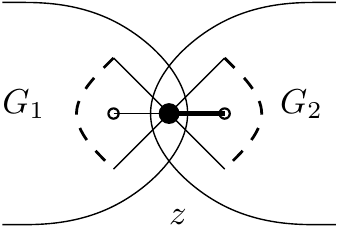}
	\end{minipage} &  &    \\
	\cline{3-7}
	&  & $j_{\ell}=0,i_r=0$ & $j_{\ell}=0,i_r=1$ & $j_{\ell}=1,i_r=0$ & &    \\
	\cline{2-7}
	& 1 & \begin{minipage}{.20\textwidth}
		\vspace{3mm}
		\includegraphics[width=\linewidth, height=25mm]{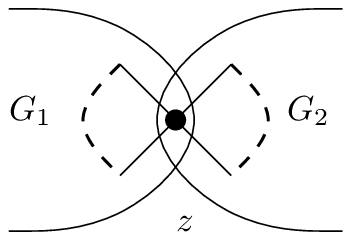}\vspace{3mm}
	\end{minipage} &  &  &  &  \\
	\cline{3-7}
	&  & $j_{\ell}=1,i_r=1$  &  &  &  &  \\
	\cline{2-7}
	& 2 & \begin{minipage}{.20\textwidth}
		\vspace{3mm}
		\includegraphics[width=\linewidth, height=25mm]{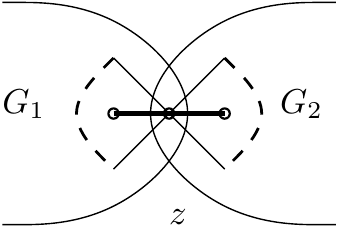}\vspace{3mm}
	\end{minipage} & \begin{minipage}{.20\textwidth}
	\includegraphics[width=\linewidth, height=25mm]{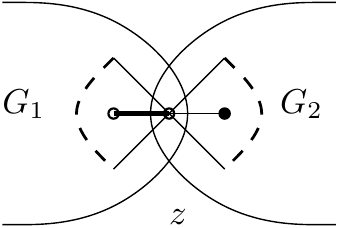}
\end{minipage} & \begin{minipage}{.20\textwidth}
\includegraphics[width=\linewidth, height=25mm]{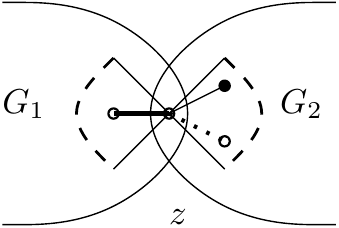}
\end{minipage} & \begin{minipage}{.20\textwidth}
\includegraphics[width=\linewidth, height=25mm]{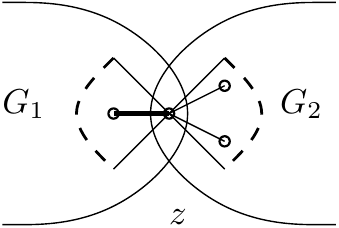}
\end{minipage} & \begin{minipage}{.20\textwidth}
\includegraphics[width=\linewidth, height=25mm]{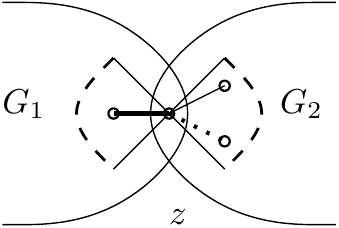}
\end{minipage}    \\
\cline{3-7}
&  & $j_{\ell}=2,i_r=2$ & $j_{\ell}=2,i_r=3$ & $j_{\ell}=2,i_r=4$ & $j_{\ell}=2,i_r=5$ & $j_{\ell}=2,i_r=6$   \\
\cline{3-7}
&  & & $j_{\ell}=3,i_r=2$ & $j_{\ell}=4,i_r=2$ & $j_{\ell}=5,i_r=2$ & $j_{\ell}=6,i_r=2$   \\
\cline{2-7}
& 3 & \begin{minipage}{.20\textwidth}
	\vspace{3mm}
	\includegraphics[width=\linewidth, height=25mm]{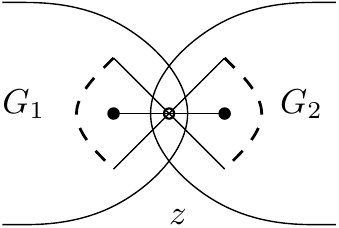}\vspace{3mm}
\end{minipage} & \begin{minipage}{.20\textwidth}
\includegraphics[width=\linewidth, height=25mm]{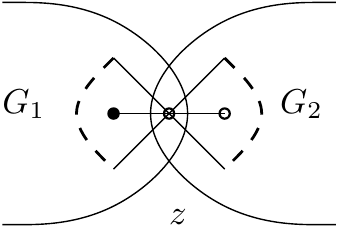}
\end{minipage} & \begin{minipage}{.20\textwidth}
\includegraphics[width=\linewidth, height=25mm]{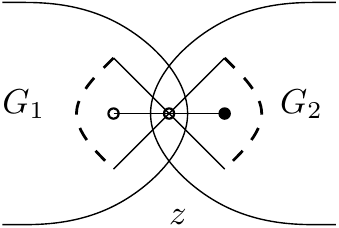}\label{test}
\end{minipage} &  &    \\
\cline{3-7}
&  & $j_{\ell}=3,i_r=3$ & $j_{\ell}=3,i_r=5$ & $j_{\ell}=5,i_r=3$ & &    \\
\cline{2-7}
\end{tabular}	}
\end{table}

\begin{table}
	\centering
	\caption{Different situations for  $p$-vertices.}\label{pvert}
	\scalebox{0.7}{  
		\begin{tabular}{ r|c|c|c|c|c|c|c|c| }
			\multicolumn{1}{r}{}
			&  \multicolumn{1}{c}{Case}
			&  \multicolumn{1}{c}{(a)}
			&  \multicolumn{1}{c}{(b)}
			& \multicolumn{1}{c}{(c)} 
			&  \multicolumn{1}{c}{(d)}
			&  \multicolumn{1}{c}{(e) }
			\\
			\cline{2-7}
			& 0 & \begin{minipage}{.20\textwidth}\vspace{2mm}
				\includegraphics[width=\linewidth, height=30mm]{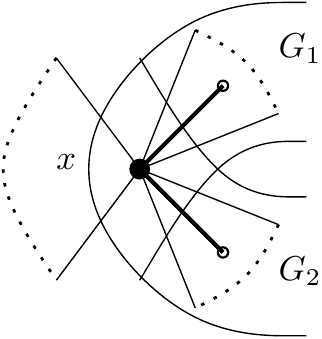}\vspace{2mm}
			\end{minipage} & \begin{minipage}{.20\textwidth}
			\includegraphics[width=\linewidth, height=30mm]{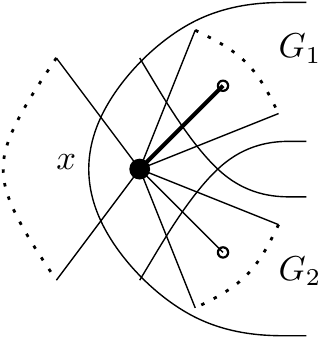}
		\end{minipage} & \begin{minipage}{.20\textwidth}
		\includegraphics[width=\linewidth, height=30mm]{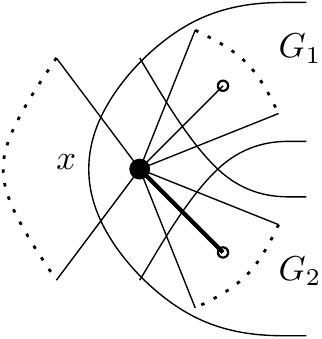}
	\end{minipage} &  &    \\
	\cline{3-7}
	&  & $j_{\ell}=0,i_r=0$ & $j_{\ell}=0,i_r=1$ & $j_{\ell}=1,i_r=0$ & &    \\
	\cline{2-7}
	& 1 & \begin{minipage}{.20\textwidth}\vspace{2mm}
		\includegraphics[width=\linewidth, height=30mm]{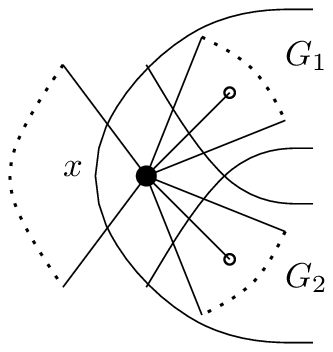}\vspace{2mm}
	\end{minipage} &  &  &  &  \\
	\cline{3-7}
	&  & $j_{\ell}=1,i_r=1$  &  &  &  &  \\
	\cline{2-7}
	& 2 & \begin{minipage}{.20\textwidth}\vspace{2mm}
		\includegraphics[width=\linewidth, height=30mm]{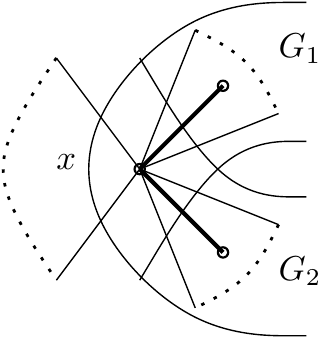}\vspace{2mm}
	\end{minipage} & \begin{minipage}{.20\textwidth}
	\includegraphics[width=\linewidth, height=30mm]{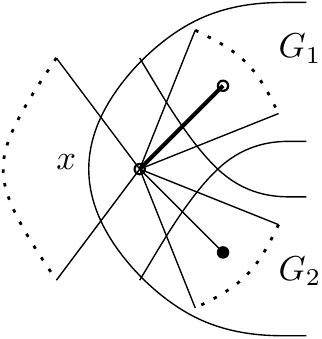}
\end{minipage} & \begin{minipage}{.20\textwidth}
\includegraphics[width=\linewidth, height=30mm]{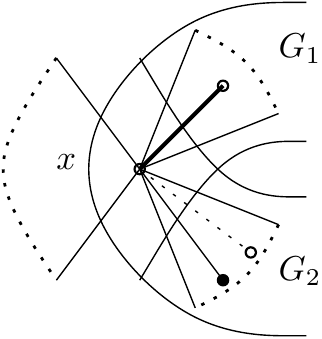}
\end{minipage} & \begin{minipage}{.20\textwidth}
\includegraphics[width=\linewidth, height=30mm]{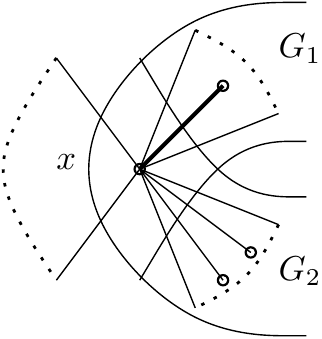}
\end{minipage} & \begin{minipage}{.20\textwidth}
\includegraphics[width=\linewidth, height=30mm]{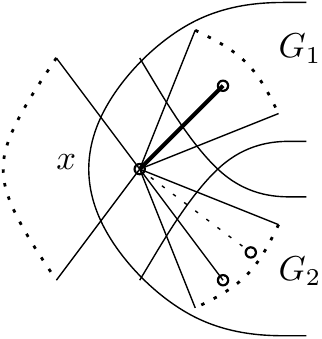}
\end{minipage}    \\
\cline{3-7}
&  & $j_{\ell}=2,i_r=2$ & $j_{\ell}=2,i_r=3$ & $j_{\ell}=2,i_r=4$ & $j_{\ell}=2,i_r=5$ & $j_{\ell}=2,i_r=6$   \\
\cline{3-7}
&  & & $j_{\ell}=3,i_r=2$ & $j_{\ell}=4,i_r=2$ & $j_{\ell}=5,i_r=2$ & $j_{\ell}=6,i_r=2$   \\
\cline{2-7}
& 3 & \begin{minipage}{.20\textwidth}\vspace{2mm}
	\includegraphics[width=\linewidth, height=30mm]{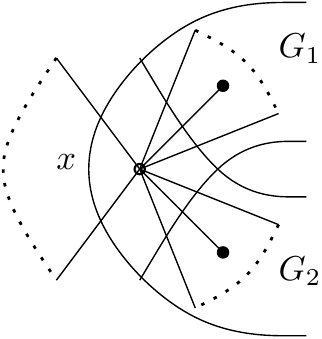}\vspace{2mm}
\end{minipage} & \begin{minipage}{.20\textwidth}
\includegraphics[width=\linewidth, height=30mm]{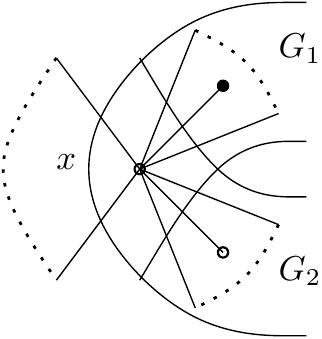}
\end{minipage} & \begin{minipage}{.20\textwidth}
\includegraphics[width=\linewidth, height=30mm]{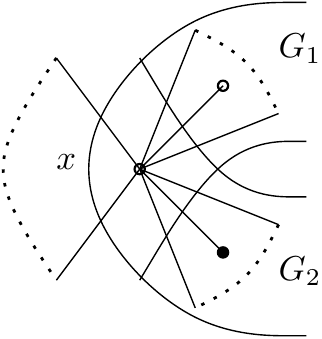}
\end{minipage} &  &    \\
\cline{3-7}
&  & $j_{\ell}=3,i_r=3$ & $j_{\ell}=3,i_r=5$ & $j_{\ell}=5,i_r=3$ & &    \\
\cline{2-7}
& 4 & \begin{minipage}{.20\textwidth}\vspace{2mm}
	\includegraphics[width=\linewidth, height=30mm]{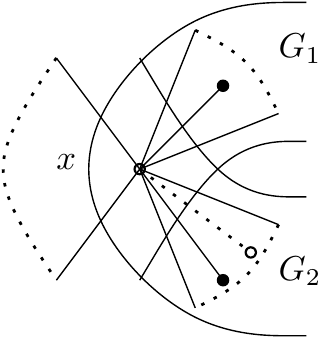}\vspace{2mm}
\end{minipage} & \begin{minipage}{.20\textwidth}
\includegraphics[width=\linewidth, height=30mm]{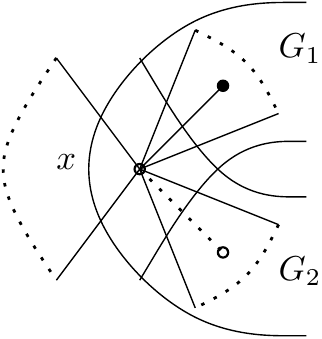}
\end{minipage} & \begin{minipage}{.20\textwidth}
\includegraphics[width=\linewidth, height=30mm]{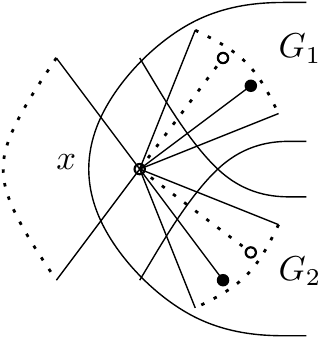}
\end{minipage} & \begin{minipage}{.20\textwidth}
\includegraphics[width=\linewidth, height=30mm]{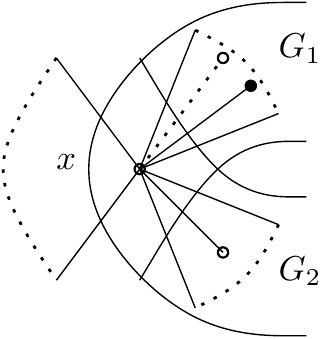}
\end{minipage} & \begin{minipage}{.20\textwidth}
\includegraphics[width=\linewidth, height=30mm]{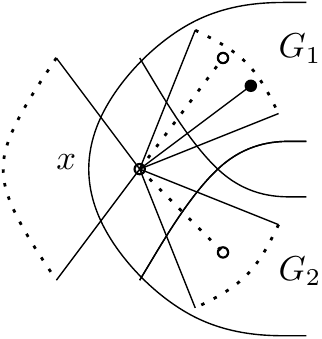}
\end{minipage}    \\
\cline{3-7}
&  & $j_{\ell}=3,i_r=4$ & $j_{\ell}=3,i_r=6$ & $j_{\ell}=4,i_r=4$ & $j_{\ell}=4,i_r=5$ & $j_{\ell}=4,i_r=6$   \\
\cline{3-7}
&  & $j_{\ell}=4,i_r=3$ & $j_{\ell}=6,i_r=3$ &  & $j_{\ell}=5,i_r=4$ & $j_{\ell}=6,i_r=4$   \\
\cline{2-7}
& 5 & \begin{minipage}{.20\textwidth}\vspace{2mm}
	\includegraphics[width=\linewidth, height=30mm]{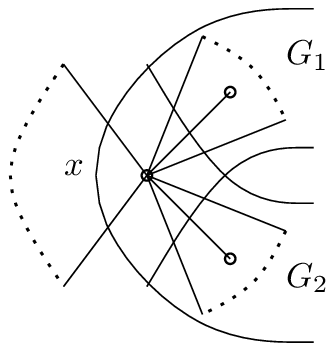}\vspace{2mm}
\end{minipage} &  &  &  &  \\
\cline{3-7}
&  & $j_{\ell}=5,i_r=5$  &  &  &  &  \\

\cline{2-7}
& 6 & \begin{minipage}{.20\textwidth}\vspace{2mm}
	\includegraphics[width=\linewidth, height=30mm]{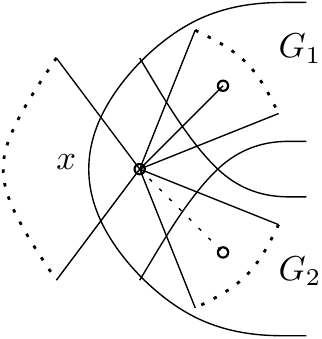}\vspace{2mm}
\end{minipage} & \begin{minipage}{.20\textwidth}
\includegraphics[width=\linewidth, height=30mm]{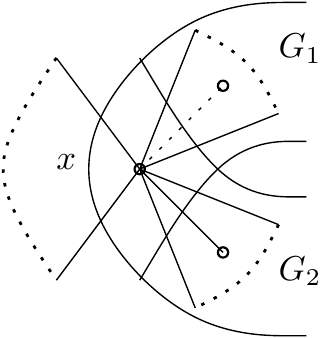}
\end{minipage} & \begin{minipage}{.20\textwidth}
\includegraphics[width=\linewidth, height=30mm]{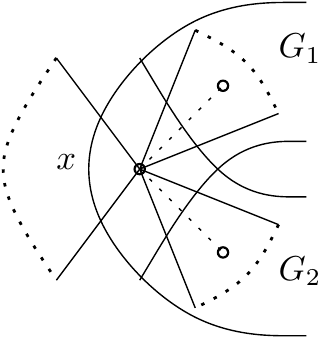}
\end{minipage} &  &    \\
\cline{3-7}
&  & $j_{\ell}=6,i_r=5$ & $j_{\ell}=5,i_r=6$ & $j_{\ell}=6,i_r=6$ &  &    \\
\cline{2-7}
\end{tabular}
}\label{pver0-6}
\end{table}

\begin{table}\label{gver0-3}
	\centering
	\caption{Different situations for  $g$-vertices.}\label{gvert}
	\scalebox{0.7}{  
		\begin{tabular}{ r|c|c|c|c|c|c|c|c| }
			\multicolumn{1}{r}{}
			&  \multicolumn{1}{c}{Case}
			&  \multicolumn{1}{c}{(a)}
			&  \multicolumn{1}{c}{(b)}
			& \multicolumn{1}{c}{(c)} 
			&  \multicolumn{1}{c}{(d)}
			&  \multicolumn{1}{c}{(e) }
			\\
			\cline{2-7}
			& 0 & \begin{minipage}{.20\textwidth}\vspace{2mm}
				\includegraphics[width=\linewidth, height=25mm]{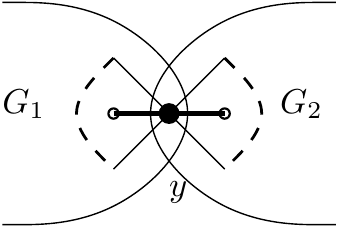}\vspace{2mm}
			\end{minipage} & \begin{minipage}{.20\textwidth}
			\includegraphics[width=\linewidth, height=25mm]{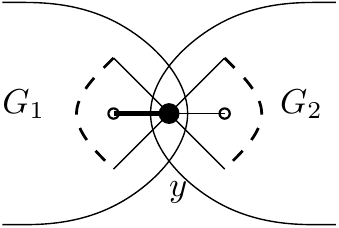}
		\end{minipage} & \begin{minipage}{.20\textwidth}
		\includegraphics[width=\linewidth, height=25mm]{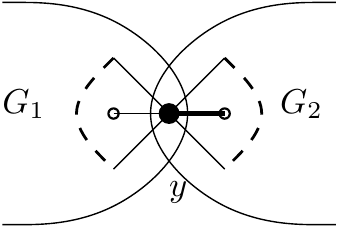}
	\end{minipage} &  &    \\
	\cline{3-7}
	&  & $j_{\ell}=0,i_r=0$ & $j_{\ell}=0,i_r=1$ & $j_{\ell}=1,i_r=0$ & &    \\
	\cline{2-7}
	& 1 & \begin{minipage}{.20\textwidth}\vspace{2mm}
		\includegraphics[width=\linewidth, height=25mm]{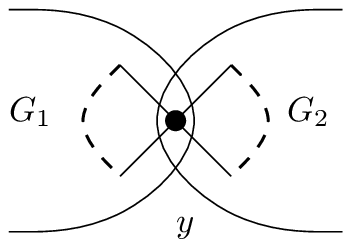}\vspace{2mm}
	\end{minipage} &  &  &  &  \\
	\cline{3-7}
	&  & $j_{\ell}=1,i_r=1$  &  &  &  &  \\
	\cline{2-7}
	& 2 & \begin{minipage}{.20\textwidth}\vspace{2mm}
		\includegraphics[width=\linewidth, height=25mm]{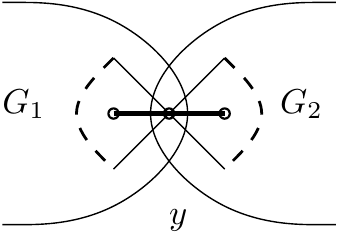}\vspace{2mm}
	\end{minipage} & \begin{minipage}{.20\textwidth}
	\includegraphics[width=\linewidth, height=25mm]{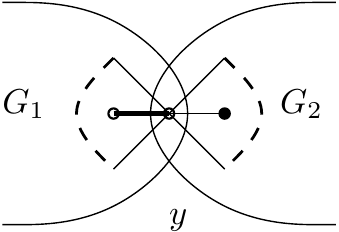}
\end{minipage} & \begin{minipage}{.20\textwidth}
\includegraphics[width=\linewidth, height=25mm]{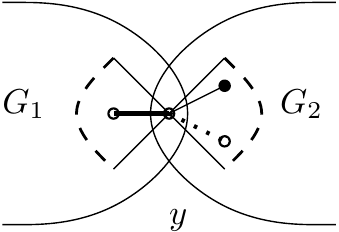}
\end{minipage} & \begin{minipage}{.20\textwidth}
\includegraphics[width=\linewidth, height=25mm]{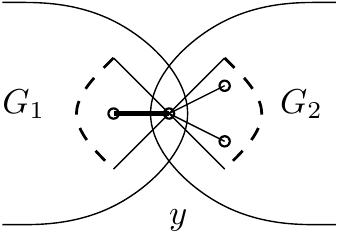}
\end{minipage} & \begin{minipage}{.20\textwidth}
\includegraphics[width=\linewidth, height=25mm]{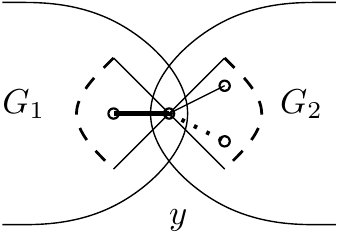}
\end{minipage}    \\
\cline{3-7}
&  & $j_{\ell}=2,i_r=2$ & $j_{\ell}=2,i_r=3$ & $j_{\ell}=2,i_r=4$ & $j_{\ell}=2,i_r=5$ & $j_{\ell}=2,i_r=6$   \\
\cline{3-7}
&  & & $j_{\ell}=3,i_r=2$ & $j_{\ell}=4,i_r=2$ & $j_{\ell}=5,i_r=2$ & $j_{\ell}=6,i_r=2$   \\
\cline{2-7}
& 3 & \begin{minipage}{.20\textwidth}\vspace{2mm}
	\includegraphics[width=\linewidth, height=25mm]{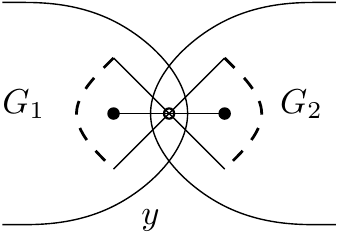}\vspace{2mm}
\end{minipage} & \begin{minipage}{.20\textwidth}
\includegraphics[width=\linewidth, height=25mm]{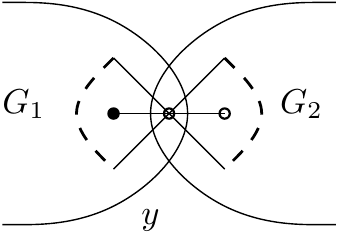}
\end{minipage} & \begin{minipage}{.20\textwidth}
\includegraphics[width=\linewidth, height=25mm]{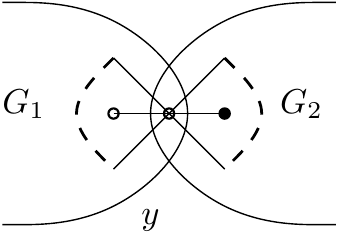}
\end{minipage} &  &    \\
\cline{3-7}
&  & $j_{\ell}=3,i_r=3$ & $j_{\ell}=3,i_r=5$ & $j_{\ell}=5,i_r=3$ & &    \\
\cline{2-7}
& 4 & \begin{minipage}{.20\textwidth}\vspace{2mm}
	\includegraphics[width=\linewidth, height=25mm]{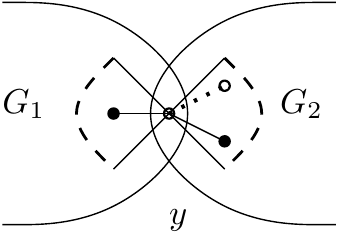}\vspace{2mm}
\end{minipage} & \begin{minipage}{.20\textwidth}
\includegraphics[width=\linewidth, height=25mm]{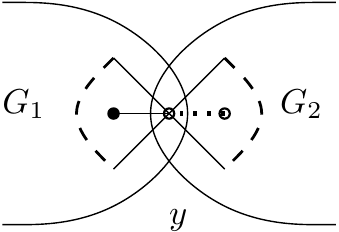}
\end{minipage} & \begin{minipage}{.20\textwidth}
\includegraphics[width=\linewidth, height=25mm]{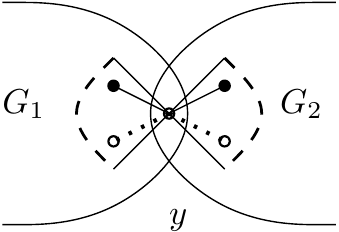}
\end{minipage} & \begin{minipage}{.20\textwidth}
\includegraphics[width=\linewidth, height=25mm]{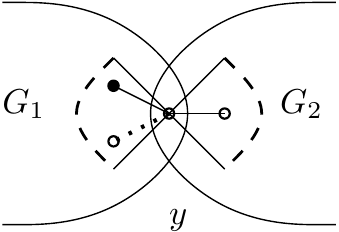}
\end{minipage} & \begin{minipage}{.20\textwidth}
\includegraphics[width=\linewidth, height=25mm]{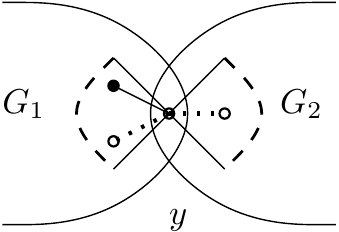}
\end{minipage}    \\
\cline{3-7}
&  & $j_{\ell}=3,i_r=4$ & $j_{\ell}=3,i_r=6$ & $j_{\ell}=4,i_r=4$ & $j_{\ell}=4,i_r=5$ & $j_{\ell}=4,i_r=6$   \\
\cline{3-7}
&  & $j_{\ell}=4,i_r=3$ & $j_{\ell}=6,i_r=3$ &  & $j_{\ell}=5,i_r=4$ & $j_{\ell}=6,i_r=4$   \\
\cline{2-7}
& 5 & \begin{minipage}{.20\textwidth}\vspace{2mm}
	\includegraphics[width=\linewidth, height=25mm]{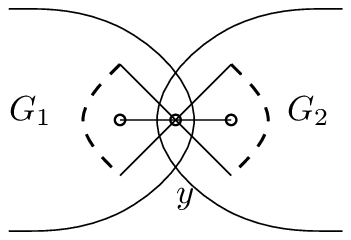}\vspace{2mm}
\end{minipage} &  &  &  &  \\
\cline{3-7}
&  & $j_{\ell}=5,i_r=5$  &  &  &  &  \\

\cline{2-7}
& 6 & \begin{minipage}{.20\textwidth}\vspace{2mm}
	\includegraphics[width=\linewidth, height=25mm]{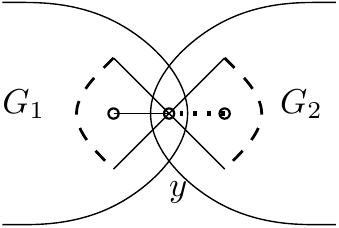}\vspace{2mm}
\end{minipage} & \begin{minipage}{.20\textwidth}
\includegraphics[width=\linewidth, height=25mm]{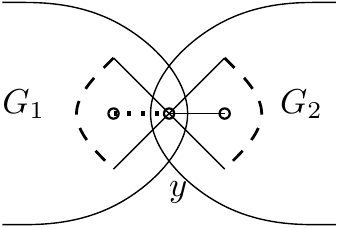}
\end{minipage} & \begin{minipage}{.20\textwidth}
\includegraphics[width=\linewidth, height=25mm]{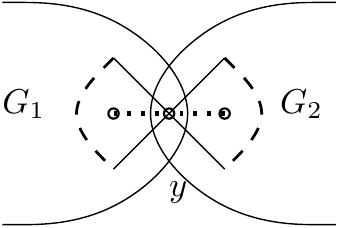}
\end{minipage} &  &    \\
\cline{3-7}
&  & $j_{\ell}=6,i_r=5$ & $j_{\ell}=5,i_r=6$ & $j_{\ell}=6,i_r=6$ &  &    \\
\cline{2-7}
\end{tabular}
}
\end{table}

\section*{Acknowledgements}

This article has been written while the fourth author was in a sabbatical visit to University of Auckland. He would like to express his gratitude to Prof.~Cristian S.~Calude and his research group for the nice and friendly hospitality. 


\end{document}